\newcommand{\defeq}{\coloneqq}
\pgfplotsset{compat=1.16}
\DeclareRobustCommand{\hlyellow}[1]{#1}
\DeclareRobustCommand{\hlorange}[1]{#1}
\DeclareRobustCommand{\hlgreen}[1]{#1}
\DeclareRobustCommand{\hlcyan}[1]{#1}
\newtheorem{theorem}{Theorem}
\newtheorem{lemma}{Lemma}
\newtheorem{remark}{Remark}
\newtheorem{problem}{Problem}
\newtheorem{proposition}{Proposition}
\title{\LARGE \bf
Stealthy Optimal Range-Sensor Placement for Target Localization
}
\author{Mohammad Hussein Yoosefian Nooshabadi, Rifat Sipahi, and Laurent Lessard%
\thanks{Research was sponsored by the Army Research Laboratory and was accomplished under Cooperative Agreement Number W911NF-23-2-0014. The views and conclusions contained in this document are those of the authors and should not be interpreted as representing the official policies, either expressed or implied, of the Army Research Laboratory or the U.S. Government. The U.S. Government is authorized to reproduce and distribute reprints for Government purposes notwithstanding any copyright notation herein.}%
\thanks{All authors are with the Department of Mechanical and Industrial Engineering, Northeastern University, Boston, MA 02115, USA.\newline
{\tt\footnotesize\{yoosefiannooshabad.m, r.sipahi, l.lessard\}\newline
@northeastern.edu}}
}
\begin{document}

\maketitle
\thispagestyle{empty}
\pagestyle{empty}

\begin{abstract}
We study a stealthy range-sensor placement problem where a set of range sensors are to be placed with respect to targets to effectively localize them while maintaining a degree of stealthiness from the targets. This is an open and challenging problem since two competing objectives must be balanced:  (a) optimally placing the sensors to maximize their ability to localize the targets and (b) minimizing the information the targets gather regarding the sensors. We provide analytical solutions in 2D for the case of any number of sensors that localize two targets.
\end{abstract}

%%%%%%%%%%%%%%%%%%%%%%%%%%%%%%%%%%%%%%%%%%%%%%%%%%%%%%%%%%%%%%%%%%%%%%%%%%%%%%%%%%%%%%
\section{INTRODUCTION}\label{sec: intro}
We consider the problem of optimal sensor placement subject to stealthiness constraints. In this problem we have a network of range-only sensors and another network of stationary \emph{targets} (also equipped with range-only sensors). The goal is to obtain spatial configurations of the sensors that maximize their ability to localize the targets while limiting the targets' ability to localize the sensors.

This problem concerns two competing objectives, \hlorange{each of which has been studied extensively.} The first competing objective is \emph{target localization} where the goal is to solely optimize the sensors' localization performance \cite{bishop2010optimality, moreno2013optimal, sadeghi2020optimal}. In \cite{moreno2013optimal} the optimal relative sensor-target configuration is derived for multiple sensors and multiple targets in 2D settings. In \cite{sadeghi2020optimal} the scenario with multiple sensors and a single target is considered where the sensors are constrained to lie inside a connected region. Both aforementioned works characterize localization performance using the Fisher Information Matrix (FIM)
\cite[\S2]{barfoot2024state}. Target localization is a special case of \textit{optimal sensor placement} where the goal is to find the optimal location of a network of sensors such that some notion of information they obtain is maximized; \hlyellow{this problem has many applications, including structural health monitoring} \cite{ostachowicz2019optimization} \hlyellow{and experiment design} \cite{zayats2010optimal}. \looseness=-1

The second competing objective is \textit{stealthy sensor placement} where the goal is to place the sensors in spatial configurations such that the localization performance of the targets is limited. Various measures of localization performance for the adversary sensors are introduced in the literature, including entropy \cite{molloy2023smoother}, predictability exponent \cite{xu2022predictability} and FIM \cite{farokhi2020privacy}. Stealthiness has also been studied in the context of mobile sensors for \hlyellow{different applications} such as adversarial search-and-rescue \cite{rahman2022adversar}, information acquisition \cite{schlotfeldt2018adversarial}, pursuit-evasion \cite{chung2011search} and covert surveillance \cite{huang2022decentralized}.
The work in \cite{karabag2019least} uses the FIM to make the policy of a single mobile sensor difficult to infer for an adversary Bayesian estimator. In \cite{khojasteh2022location} a sensor equipped with a range sensor is allowed to deviate from its prescribed trajectory to enhance its location secrecy encoded via the FIM. The notion of stealthiness is also related to the topics of \textit{privacy} and \textit{security}, which have \hlyellow{applications in} numerical optimization \cite{farokhi2020privacy}, machine learning \cite{zhang2016dynamic}, \hlyellow{smart grids} \cite{li2018information}, and communication \cite{wang2021physical}.

In the present paper, \hlorange{we combine the two aforementioned objectives by optimizing localization performance subject to a stealthiness constraint, quantifying each using a min-max formulation of the FIM. To the best of our knowledge, the present study is the first to consider this combination of objectives.} \hlyellow{Possible applications include cooperative distributed sensing and cyber-physical system security; detecting suspicious activities without alerting the attackers.} 

The paper is organized as follows. In \cref{sec: problem setup}, we describe our min-max FIM formulation. The general case of arbitrarily many sensors and targets leads to a difficult non-convex problem, thus we focus on more tractable special cases. In \cref{sec: Problem formulation 2t2a} we provide a complete solution in the case of two sensors and two targets in 2D. Next, in \cref{sec:Extension}, we treat the case of arbitrarily many sensors and two targets and provide various analytic performance bounds. In \cref{sec: Conclusion and Future Work}, we conclude and discuss future directions.

%%%%%%%%%%%%%%%%%%%%%%%%%%%%%%%%%%%%%%%%%%%%%%%%%%%%%%%%%%%%%%%%%%%%%%%%%%%%%%%%%%%%%%
\section{PROBLEM SETUP} \label{sec: problem setup}

Consider a 2D arrangement of $m$ sensors $s_1,\dots,s_m$ and $n$ targets $t_1,\dots,t_n$. A possible configuration of these sensors and targets is depicted in  \cref{fig: problemDefinition}. We let $\theta_{ij,k}$ denote the angle between $s_i$ and $s_j$ as viewed from $t_k$. Similarly, we let $\beta_{k\ell,j}$ denote the angle between $t_k$ and $t_\ell$ as viewed from $s_j$. 

\begin{figure}[b!]
    \centering
    \includegraphics{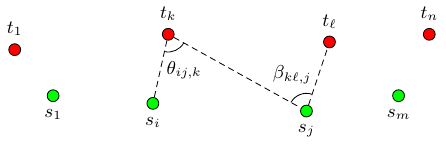}
    \caption{The problem setup in this paper. A set of $m$ sensors (in green) are to be placed such that their ability to localize a set of $n$ targets (in red) is maximized while the targets' ability to localize the sensors is limited.}
    \label{fig: problemDefinition}
\end{figure}

We assume sensors and targets use \emph{range-only sensing} and each \hlorange{measurement is subject to additive zero-mean Gaussian noise with variance $\sigma^2$.} Due to the noise, the spatial configuration of the sensors relative to the targets is critical to effectively fuse sensor measurements for localization purposes. The FIM is broadly used to quantify the quality of localization \cite{martinez2006optimal}; it is the inverse of the covariance matrix of the target's position conditioned on the observed measurements. We use the so-called \emph{D-optimality} criterion, which is the determinant of the FIM. 
Other FIM-based criteria include the A-optimality or E-optimality \cite[\S7]{boyd2004convex}. Since we are using range-only sensing in a 2D setting, the D-optimality, A-optimality, and E-optimality criteria are equivalent \cite{sadeghi2020optimal}. 

We denote by $\mathcal{I}_k$ the D-optimality criterion for sensors $s_1,\dots,s_m$ localizing a target $t_k$ using their collective range measurements. Similarly we denote by $\mathcal{J}_i$ the D-optimality criterion for targets $t_1,\dots,t_n$ localizing a sensor $s_i$, where
\begin{align}\label{eq: det FIM of target k}
    \mathcal{I}_k & = \frac{1}{\sigma^4}\sum_{1 \leq i < j \leq m}\mkern-18mu\sin^2{\theta_{ij, k}}, &
    \mathcal{J}_i & = \frac{1}{\sigma^4}\sum_{1\leq k < \ell \leq n}\mkern-18mu\sin^2{\beta_{k\ell, i}}.
\end{align}
For a detailed derivation of \eqref{eq: det FIM of target k}, see \cite{martinez2006optimal}.
\hlgreen{Assuming Gaussian noise is critical in obtaining {\eqref{eq: det FIM of target k}}. Intuitively, information from two range measurements is maximized when the range vectors are perpendicular and minimized when they are parallel.} 

Motivated by the goal of obtaining the best possible localization of a set of targets while avoiding detection by those same targets, we formulate our problem as follows.

\begin{problem}[\textit{min-max optimal stealthy sensor placement}]\label{problem 1}
Given target locations $t_1,\dots,t_n$ find angles $\theta_{ij, k}$ and $\beta_{k\ell, i}$ shown in \cref{fig: problemDefinition} corresponding to a feasible arrangement of the sensors $s_1,\dots,s_m$ such that the minimum information that the sensors obtain about the targets is maximized while the maximum information that the targets obtain about the sensors is less than some prescribed level $\gamma^2$. We call $\gamma$ the \textit{information leakage level}.
\end{problem}

We can formulate \cref{problem 1} as the  optimization problem
\begin{align}\label{eq: problem def general}
        \underset{\theta,\, \beta}{\text{maximize}} \quad & \min_{1 \leq k \leq n} \mathcal{I}_k\\
        \text{subject to:} \quad & \max_{1\leq i \leq m} \mathcal{J}_i \leq \gamma^2 \notag\\
       & (\theta, \beta) \in \mathcal{F}, \notag
\end{align}
\hlgreen{where $\mathcal{F}$ is the set of all geometrically feasible $(\theta,\beta)$.
This constraint ensures that the spatial arrangement of sensors and targets with angles $\theta$ and $\beta$ is realizable. We compute $\mathcal{F}$ for the special case $m=n=2$ in} \cref{prop:cases}.

\begin{remark}
Instead of constraining the maximum of $\mathcal{J}_i$ (the most information the targets have about any particular sensor), one could constrain the sum or the product of $\mathcal{J}_i$ or some other norm of the vector $(\mathcal{J}_1,\dots,\mathcal{J}_m)$. It is also possible to apply different norms to the expression of $\mathcal{I}_k$.
\end{remark}

In the next section, we solve \cref{problem 1} for the special case of $m=2$ sensors and $n=2$ targets.

%%%%%%%%%%%%%%%%%%%%%%%%%%%%%%%%%%%%%%%%%%%%%%%%%%%%%%%%%%%%%%%%%%%%%%%%%%%%%%%%%%%%%%
\section{TWO SENSORS AND TWO TARGETS}\label{sec: Problem formulation 2t2a}
\hlcyan{Substituting {\eqref{eq: det FIM of target k}} into {\eqref{eq: problem def general}} with $m=n=2$ yields}
\begin{subequations}\label{eq: problem def for 2s2t}
\begin{align}
        \underset{\theta_1, \theta_2, \beta_1, \beta_2}{\text{maximize}} \quad & \min \bigl(\sin^2\theta_1, \, \sin^2\theta_2\bigr)  \label{opt:obj}\\ 
        \text{subject to:} \quad & \max\bigl(\sin^2\beta_1, \, \sin^2\beta_2\bigr) \leq \gamma^2 \label{opt:gamma}\\ 
        & (\theta_1, \theta_2, \beta_1, \beta_2) \in \mathcal{F}, \label{opt:h}
\end{align}
\end{subequations}
where we have used the simplified notation $\theta_k \defeq \theta_{12,k}$ and similarly for $\beta_k$. \hlorange{We also assumed without loss of generality that $\sigma = 1$.} We analyze \eqref{eq: problem def for 2s2t} in three steps:

\paragraph{The objective \eqref{opt:obj}} In epigraph form, the level sets $\min\bigl(\sin^2\theta_1, \, \sin^2\theta_2\bigr) \geq \eta^2$ consist of configurations where $\sin\theta_k \geq \eta$ for $k=1,2$. In other words,
\begin{equation}\label{eq:thetarange}
\theta_k \in [ \arcsin\eta, \pi-\arcsin\eta ]
\qquad\text{for }k=1,2.
\end{equation}
The set of points $t_k$ for which $\theta_k$ is fixed is the arc of a circle passing through $s_1$ and $s_2$. Therefore, given $\eta$, sublevel sets for all $\theta_k$ in \eqref{eq:thetarange} are the exclusive-OR between two congruent discs whose boundaries intersect at $s_1$ and $s_2$ (as shown in \cref{fig: level sets} Left). The objective is maximized at the value 1 when $\theta_1=\theta_2=\tfrac{\pi}{2}$. This corresponds to the configuration when both circles coincide with the dotted circle and $t_k$ lies on this circle. This is logical since localization error is minimized when the range vectors are orthogonal.

\begin{figure}[ht]
\centering
\includegraphics[page=1]{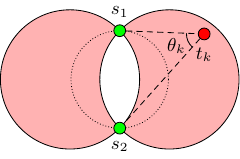}
\includegraphics[page=2]{fig_level_sets}
\caption{\textbf{Left:} Shaded region that must contain $t_1$ and $t_2$ (relative to $s_1$ and $s_2$) so that the objective \eqref{opt:obj} is at least $\eta^2$. The region is formed by two intersecting circles ($\eta=0.7$ shown). The dotted circle shows $\eta=1$.
\textbf{Right:} Shaded region that must contain $s_1$ and $s_2$ (relative to $t_1$ and $t_2$) so that information leakage level is at most $\gamma$ ($\gamma=0.7$ shown).}
\label{fig: level sets}
\end{figure}

\paragraph{Stealthiness constraint \eqref{opt:gamma}} Similar to \eqref{eq:thetarange} we have
\begin{equation}\label{eq:betarange}
\beta_i \in [0,\arcsin\gamma] \cup [\pi-\arcsin\gamma, \pi]
\quad\text{for }i=1,2.
\end{equation}
The set of admissible $\beta_i$ is shown in \cref{fig: level sets} Right. Intuitively, the targets have a large localization error for a sensor when the targets' range vectors are close to being parallel ($\beta_i$ is close to $0$ or to $\pi$). This splits the feasible set into two disjoint regions: \emph{between} $t_1$ and $t_2$ or \emph{outside}.

\paragraph{Feasible configurations \eqref{opt:h}}  Considering a quadrilateral whose vertices are formed by the sensors and the targets, there are several cases to consider; depending on whether any interior angles are greater than $\pi$ or whether the quadrilateral is self-intersecting. By inspection we have seven distinctive cases, illustrated in \cref{fig: seven cases}. Each case corresponds to a set of constraints given next in \cref{prop:cases}.

\begin{figure*}
\vspace{2mm}
    \centering
    \includegraphics[page=1]{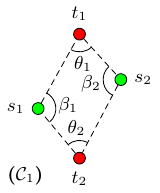}\hfill
    \includegraphics[page=2]{fig_seven_cases}\hfill
    \includegraphics[page=3]{fig_seven_cases}\hfill
    \includegraphics[page=4]{fig_seven_cases}\hfill
    \includegraphics[page=5]{fig_seven_cases}\hfill
    \includegraphics[page=6]{fig_seven_cases}\hfill
    \includegraphics[page=7]{fig_seven_cases}
    \caption{
    The seven possible configurations for two sensors and two targets. Each case is characterized by different constraints relating $\theta_1$, $\theta_2$, $\beta_1$ and $\beta_2$ which are given in \cref{prop:cases}.
    If sensors $s_1$ and $s_2$ are interchangeable then $\mathcal{C}_4 = \mathcal{C}_5$ and $\mathcal{C}_6 = \mathcal{C}_7$ and there are only five distinct cases.
    }
    \vspace{-2mm}
    \label{fig: seven cases}
\end{figure*}

\begin{proposition}\label{prop:cases}
    The feasible configurations $\mathcal{F}$ in \eqref{opt:h} is the union of the seven cases shown in \cref{fig: seven cases}. In other words,
    \begin{equation*}
        \mathcal{F} = \biggl\{(\theta_1, \theta_2, \beta_1, \beta_2) \in [0,\pi]^4
        \;\bigg|\; \bigcup_{i=1}^7 \mathcal{C}_i\biggr\},
    \end{equation*}
    where $\mathcal{C}_i$ for $i=1,\dots,7$ is the constraint set corresponding to the $i\textsuperscript{th}$ case shown in \cref{fig: seven cases} and expressed below:
\begin{subequations}
    \begin{align*}
        \mathcal{C}_1 &: 
            &&\theta_1 + \theta_2 + \beta_1 + \beta_2 = 2\pi\\ 
        \mathcal{C}_2 &: 
             &-&\theta_1 + \theta_2 + \beta_1 + \beta_2 = 0,\;\;
            \theta_2 \leq \theta_1\\
        \mathcal{C}_3 &:  
            &&\theta_1 - \theta_2 + \beta_1 + \beta_2 = 0,\;\;
            \theta_1 \leq \theta_2\\
        \mathcal{C}_4 &:  
            &&\theta_1 + \theta_2 - \beta_1 + \beta_2 = 0,\;\;
            \beta_2 \leq \beta_1\\
        \mathcal{C}_5 &: 
            &&\theta_1 + \theta_2 + \beta_1 - \beta_2 = 0,\;\;
            \beta_1 \leq \beta_2\\
        \mathcal{C}_6 &:
            &&\theta_1 - \theta_2 + \beta_1 - \beta_2 = 0,\;\;
            \theta_1 + \beta_1 \leq \pi \\
        \mathcal{C}_7 &:
            &&\theta_1 - \theta_2 - \beta_1 + \beta_2 = 0,\;\;
        \theta_1 + \beta_2 \leq \pi 
    \end{align*}
\end{subequations}
If the sensors $s_1$ and $s_2$ are interchangeable (e.g., if there are no additional constraints that distinguish $s_1$ from $s_2$) then swapping $\beta_1$ and $\beta_2$ leads to $\mathcal{C}_4 = \mathcal{C}_5$ and $\mathcal{C}_6 = \mathcal{C}_7$.
\end{proposition}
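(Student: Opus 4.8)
The plan is to identify $\theta_1,\theta_2,\beta_1,\beta_2$ with the four vertex angles of the quadrilateral $s_1 t_1 s_2 t_2$ --- the one whose two diagonals are the segments $s_1s_2$ and $t_1t_2$ --- and then to enumerate the combinatorial types such a labeled quadrilateral can have. It is convenient to first treat the generic case in which no two of the four points coincide and no three are collinear; the degenerate configurations are limits of generic ones, and since each $\mathcal{C}_i$ is a closed set this does not change the claimed equality. For four labeled points in general position with the diagonal pair $\{s_1s_2,\,t_1t_2\}$ prescribed, there are exactly three mutually exclusive possibilities: (i) the points are in convex position and $s_1s_2$ meets $t_1t_2$, so $s_1 t_1 s_2 t_2$ is a convex quadrilateral; (ii) the points are in convex position but $s_1s_2$ and $t_1t_2$ do not meet, so $s_1 t_1 s_2 t_2$ is self-intersecting --- and there are two such configurations, according to which pair of opposite sides crosses; (iii) the convex hull is a triangle and exactly one of the four points lies strictly inside it --- four configurations, according to which point. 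This gives $1+2+4=7$ configurations, matching \cref{fig: seven cases}.

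Next I would attach the equation and inequality to each configuration by elementary angle chasing. In case (i) the interior angles of a convex quadrilateral sum to $2\pi$, which is $\mathcal{C}_1$. In case (iii), if (say) $t_1$ lies inside the triangle $s_1 s_2 t_2$, then extending the ray $t_2 t_1$ until it meets $s_1 s_2$ and using the exterior-angle theorem in the two resulting triangles (which share the side $t_1 t_2$) gives $\theta_1 = \theta_2 + \beta_1 + \beta_2$; since $\beta_1,\beta_2 \ge 0$ this forces $\theta_2 \le \theta_1$, i.e. $\mathcal{C}_2$, and the other three interior-point configurations produce $\mathcal{C}_3$, $\mathcal{C}_4$, $\mathcal{C}_5$ by permuting the roles of the points. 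In case (ii), letting $X$ be the intersection of the two crossing sides, the angle sums of the two triangles cut off at $X$ together with the equality of the vertical angles at $X$ give a relation of the form $\theta_1 + \beta_j = \theta_2 + \beta_{3-j}$, and positivity of the third angle of those triangles gives the accompanying constraint of the form $\theta_1 + \beta_k \le \pi$; these are $\mathcal{C}_6$ and $\mathcal{C}_7$. Together these show $\mathcal{F} \subseteq \bigcup_{i=1}^7 \mathcal{C}_i$.

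For the reverse inclusion I would check that every tuple in a given $\mathcal{C}_i$ is realizable: in each case the defining linear equation is precisely the condition for the corresponding polygon to close up, so one can exhibit a witness by prescribing the angles and choosing the remaining freedoms (edge lengths and overall scale, then a rotation/translation placing $t_1,t_2$ at their given locations) --- e.g. any convex quadrilateral with the given angles and cyclic vertex order $s_1,t_1,s_2,t_2$ for $\mathcal{C}_1$, and analogously for the interior-point and crossed cases. Finally, the interchangeability remark follows by noting that swapping $s_1 \leftrightarrow s_2$ exchanges $\beta_1 \leftrightarrow \beta_2$ and fixes $\theta_1,\theta_2$; this map fixes $\mathcal{C}_1$, fixes the ``$t_1$ interior'' and ``$t_2$ interior'' cases, and exchanges ``$s_1$ interior'' with ``$s_2$ interior'' and the two crossed cases, i.e. $\mathcal{C}_4 \leftrightarrow \mathcal{C}_5$ and $\mathcal{C}_6 \leftrightarrow \mathcal{C}_7$, leaving five distinct cases.

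The step I expect to be the main obstacle is the enumeration itself: making it fully rigorous that the seven pictures are exhaustive and mutually exclusive, i.e. that every placement of the four labeled points, once the degenerate cases are set aside, realizes exactly one of these combinatorial types. The per-case angle identities are routine by comparison. A secondary point requiring care is the realizability (reverse) direction, where one must confirm that an angle tuple constrained only by $[0,\pi]^4$ and a single linear relation can always be completed to an actual planar configuration of the prescribed type.
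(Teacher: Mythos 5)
Your proposal is correct and takes essentially the same route as the paper, which justifies \cref{prop:cases} only by inspection of the seven quadrilateral configurations in \cref{fig: seven cases} (convex, one reflex/interior vertex in four ways, self-intersecting in two ways): your split into convex position with crossing diagonals, convex position with non-crossing diagonals, and triangle hull with one interior point is exactly that enumeration, and your per-case angle identities and the $s_1\leftrightarrow s_2$ swap argument reproduce $\mathcal{C}_1$--$\mathcal{C}_7$ and the $\mathcal{C}_4=\mathcal{C}_5$, $\mathcal{C}_6=\mathcal{C}_7$ remark. The realizability and degenerate/limiting-configuration caveats you flag (e.g.\ boundary tuples attained only as limits) are glossed over by the paper as well, so your sketch is, if anything, more explicit than the paper's own justification.
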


\medskip

The following result is useful in the sequel and is a straightforward consequence of the constraint equations $\mathcal{C}_i$.

\medskip

\begin{lemma}\label{lem:C1}
Suppose $(\theta_1, \theta_2, \beta_1, \beta_2) \in \mathcal{F}$, where $\mathcal{F}$ is defined in \cref{prop:cases}. Then $\theta_1 + \theta_2 + \beta_1 + \beta_2 \leq 2\pi$, where equality is achievable in a non-degenerate configuration (no sensor is placed arbitrarily close to a target) if and only if $(\theta_1, \theta_2, \beta_1, \beta_2) \in \mathcal{C}_1$.
\end{lemma}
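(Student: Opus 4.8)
The plan is to lean entirely on the structure of $\mathcal{F}$ given by \cref{prop:cases}: it is the union of the seven sets $\mathcal{C}_1,\dots,\mathcal{C}_7$, each cut out by a single linear equation in $\theta_1,\theta_2,\beta_1,\beta_2$ (and, for $\mathcal{C}_4$ through $\mathcal{C}_7$, one extra linear inequality), with all four angles confined to $[0,\pi]$. So I would check the bound $\theta_1+\theta_2+\beta_1+\beta_2 \le 2\pi$ on each case. On $\mathcal{C}_1$ there is nothing to prove, since its defining equation is precisely $\theta_1+\theta_2+\beta_1+\beta_2 = 2\pi$. On $\mathcal{C}_2$ (resp.\ $\mathcal{C}_3$) I would use the equation to eliminate $\beta_1+\beta_2$ and rewrite the sum as $2\theta_1$ (resp.\ $2\theta_2$), which is at most $2\pi$ because each angle is at most $\pi$; symmetrically, on $\mathcal{C}_4$ and $\mathcal{C}_5$ the sum becomes $2\beta_1$ and $2\beta_2$. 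On $\mathcal{C}_6$ the equation gives $\theta_1+\beta_1 = \theta_2+\beta_2$, so the sum equals $2(\theta_1+\beta_1)$, and the accompanying inequality $\theta_1+\beta_1\le\pi$ closes the argument; $\mathcal{C}_7$ is handled identically with $\beta_2$ in place of $\beta_1$. This settles the inequality.

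For the equality claim, the reverse implication is immediate, since membership in $\mathcal{C}_1$ means exactly that $\theta_1+\theta_2+\beta_1+\beta_2 = 2\pi$. For the forward implication, the reductions above show that if the bound is attained on $\mathcal{C}_i$ for some $i \ge 2$, then the relevant intermediate quantity is forced to its maximal value ($\theta_1=\pi$ on $\mathcal{C}_2$, $\theta_2=\pi$ on $\mathcal{C}_3$, $\beta_1=\pi$ on $\mathcal{C}_4$, $\beta_2=\pi$ on $\mathcal{C}_5$, and $\theta_1+\beta_1=\pi$ resp.\ $\theta_1+\beta_2=\pi$ on $\mathcal{C}_6$ resp.\ $\mathcal{C}_7$), and substituting this back into the corresponding defining equation collapses it to $\theta_1+\theta_2+\beta_1+\beta_2 = 2\pi$, so the tuple also lies in $\mathcal{C}_1$. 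Hence the maximum is reached exactly on $\mathcal{C}_1$. To see it is reached by a genuine, non-degenerate arrangement, I would exhibit $\theta_1=\theta_2=\beta_1=\beta_2=\tfrac{\pi}{2}$, realized by placing $s_1,t_1,s_2,t_2$ at consecutive corners of a square: this tuple lies in $\mathcal{C}_1$, meets the bound, and keeps every sensor at positive distance from every target.

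I do not expect a real obstacle here — the paper itself flags the result as a direct consequence of the equations $\mathcal{C}_i$ — and the only points needing care are to invoke the side inequalities (not merely the equations) in $\mathcal{C}_6$ and $\mathcal{C}_7$, and to phrase the equality statement so it is transparent that every tuple in $\mathcal{F}$ attaining the bound already satisfies the $\mathcal{C}_1$ equation, while a non-degenerate realization of the bound (the square) still exists. If a geometric gloss is wanted, one can add that cases $2$ through $7$ correspond to reflex or self-intersecting quadrilaterals on the vertices $s_1,s_2,t_1,t_2$, for which the sum of the non-reflex angles $\theta_i,\beta_i$ falls below the value $2\pi$ attained by the convex, alternating quadrilateral of case $1$.
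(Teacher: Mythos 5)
Your case-by-case verification is exactly the ``straightforward consequence of the constraint equations'' that the paper invokes without writing out: the defining equation of each $\mathcal{C}_i$ (plus the side inequalities in $\mathcal{C}_6$, $\mathcal{C}_7$) collapses the sum to twice a quantity bounded by $\pi$, equality forces the tuple back onto the $\mathcal{C}_1$ hyperplane, and the square gives a non-degenerate witness. The argument is correct and matches the paper's intended route, so there is nothing to add.
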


\medskip

We provide the analytical results in two theorems. Our first theorem considers the unconstrained case where the sensors may be placed anywhere in relation to the targets.
\begin{theorem}\label{theorem 1}
  Consider the optimization problem \eqref{eq: problem def for 2s2t}. If the sensors $s_1$ and $s_2$ can be freely placed anywhere then a configuration of sensors is globally optimal if and only if:
  \begin{enumerate}[(i)]
  \item $s_1$, $s_2$, $t_1$, $t_2$ are \emph{cyclic} (they lie on a common circle),
  \item $s_1$ and $s_2$ are diametrically opposed on this circle, and
  \item The common circle has diameter at least $d/\gamma$; where $d$ is the distance between $t_1$ and $t_2$.
  \end{enumerate}
  Moreover, any such configuration satisfies $\theta_1 = \theta_2 = \frac{\pi}{2}$ and $\sin\beta_1 = \sin\beta_2$ and has an optimal objective value of $1$.
\end{theorem}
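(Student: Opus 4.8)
The plan rests on two elementary facts about circles. (a) \emph{Converse of Thales' theorem}: a point $P$ sees a segment $AB$ at a right angle precisely when $P$ lies on the circle with diameter $AB$. (b) \emph{Inscribed-angle / law-of-sines identity}: if a chord of length $d$ lies in a circle of diameter $D$, then every inscribed angle subtending that chord has sine equal to $d/D$ (it equals $\arcsin(d/D)$ at points on one arc and $\pi-\arcsin(d/D)$ at points on the other). Everything else is bookkeeping.

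I would first pin down the optimal value. The objective \eqref{opt:obj} never exceeds $1$ and attains $1$ exactly when $\theta_1=\theta_2=\tfrac{\pi}{2}$. To see that $1$ is feasible, build a configuration: among the circles through $t_1$ and $t_2$ — whose diameters sweep out $[d,\infty)$ — pick one of diameter $D\ge d/\gamma$ (possible since $d/\gamma\ge d$ when $\gamma\le 1$; the case $\gamma\ge 1$ is trivial because then \eqref{opt:gamma} is vacuous), and place $s_1,s_2$ at diametrically opposite points of it. By (a), $\theta_1=\theta_2=\tfrac{\pi}{2}$; by (b), $\sin\beta_1=\sin\beta_2=d/D\le\gamma$, so \eqref{opt:gamma} holds; and the tuple is realized by an honest arrangement of four points, hence lies in $\mathcal{F}$ by \cref{prop:cases}. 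Therefore the optimal value of \eqref{eq: problem def for 2s2t} is $1$.

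Now both implications follow from the same computation. If a configuration is optimal, then $\min(\sin^2\theta_1,\sin^2\theta_2)=1$, so $\theta_1=\theta_2=\tfrac{\pi}{2}$; by (a) applied to $t_1$ and to $t_2$, both targets lie on the circle having $s_1s_2$ as a diameter — this is (i) and (ii), with $D$ its diameter. By (b), $\sin\beta_1=\sin\beta_2=d/D$, and feasibility of \eqref{opt:gamma} forces $(d/D)^2\le\gamma^2$, i.e.\ $D\ge d/\gamma$, which is (iii); this also records the ``moreover'' claims. Conversely, if (i)--(iii) hold, the identical chain — (a) for the $\theta$'s, (b) for the $\beta$'s, and $D\ge d/\gamma$ to verify \eqref{opt:gamma} — shows the configuration is feasible with objective value $1$, hence globally optimal.

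The only delicate points are degeneracy and the sign in (b). One must exclude arrangements in which a sensor coincides with a target (so that some $\theta_k$ or $\beta_i$ is undefined); the theorem's non-degeneracy proviso handles this. For (b) I would split on whether $t_1,t_2$ lie on the same semicircle cut off by the diameter $s_1s_2$ (then $\beta_1=\beta_2$, and the tuple sits in $\mathcal{C}_6=\mathcal{C}_7$) or on opposite semicircles (then $\beta_1+\beta_2=\pi$, and the tuple sits in $\mathcal{C}_1$, consistent with \cref{lem:C1}); in either case $\sin\beta_1=\sin\beta_2=d/D$, which is all the argument needs. I expect this short case check, rather than anything conceptual, to be the main thing to get right.
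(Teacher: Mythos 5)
Your proposal is correct and follows essentially the same route as the paper: bound the objective by $1$, note it is attained exactly when $\theta_1=\theta_2=\tfrac{\pi}{2}$ (Thales, giving the cyclic/diametrically-opposed conditions), apply the Law of Sines to get $\sin\beta_i=d/D$, and translate the stealthiness constraint into $D\ge d/\gamma$, with the same case split into $\mathcal{C}_1$ versus $\mathcal{C}_6$/$\mathcal{C}_7$. The only slight inaccuracy is your reference to a ``non-degeneracy proviso'' in \cref{theorem 1} (that proviso appears only in \cref{theorem 2}); it is harmless here since $\theta_1=\theta_2=\tfrac{\pi}{2}$ already rules out a sensor coinciding with a target.
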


\begin{proof}
    The objective \eqref{opt:obj} is upper-bounded by $1$, which is achieved if and only if $\theta_1=\theta_2=\frac{\pi}{2}$. This is possible if and only if $s_1$, $s_2$, $t_1$ and $t_2$ are on a common circle with diameter $\abs{s_1s_2}$. 
    If $s_1$ and $s_2$ lie on alternating sides of $t_1$ and $t_2$, we recover $\mathcal{C}_1$ from \cref{fig: seven cases}. Otherwise, we recover cases $\mathcal{C}_6$ or $\mathcal{C}_7$.
    Given such a configuration where the common circle has diameter $D$ apply the Law of Sines to $\triangle t_1 t_2 s_i$ and obtain $d/\sin\beta_i = D$. Now \eqref{opt:gamma} is equivalent to $\sin\beta_i \leq \gamma$ and therefore $D \geq d/\gamma$.
\end{proof}

Examples of optimal configurations proved in Theorem \ref{theorem 1} for the cases $\mathcal{C}_1$ and $\mathcal{C}_6$ or $\mathcal{C}_7$ are illustrated in \cref{fig: 2t2s Theorem1}.

\begin{figure}[ht]
    \centering
    \includegraphics[page=1]{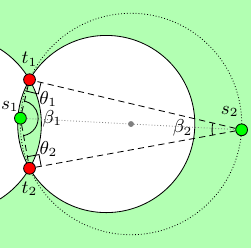}
    \includegraphics[page=2]{fig_thm1}
    \caption{Examples of optimal sensor configurations when positions are unconstrained (\cref{theorem 1}). The solid circles delineate the feasible set (see \cref{fig: level sets}). The dotted circle is any larger circle passing through $t_1$ and $t_2$. A configuration is optimal if and only if the sensors $s_1$ and $s_2$ lie on this larger circle and are diametrically opposed. This can happen with alternating sensors and targets (left) or with both sensors on the same side (right).}
    \label{fig: 2t2s Theorem1}
\end{figure}

Recall that the feasible set is split into two disjoint regions (see \cref{fig: level sets} Right). \cref{theorem 1} confirms that optimal configurations exist with one sensor between $t_1$ and $t_2$ and one outside or with both sensors outside (see \cref{fig: 2t2s Theorem1}). We next investigate the scenarios where \emph{both} sensors are between $t_1$ and $t_2$; that is $\beta_i \in [\pi-\arcsin\gamma,\pi]$ for $i=1,2$.

\begin{theorem}\label{theorem 2}
    Consider \eqref{eq: problem def for 2s2t} with the extra constraint $\beta_i \in [\pi-\arcsin\gamma,\pi]$ for $i=1,2$.
    Then, a non-degenerate configuration of sensors is globally optimal if and only if:
  \begin{enumerate}[(i)]
  \item $s_1$, $t_1$, $s_2$ and $t_2$ are vertices of a parallelogram, and
  \item $s_1$ and $s_2$ are on different circles through $t_1$ and $t_2$ with diameter $d/\gamma$; where $d$ is the distance between $t_1$, $t_2$.
  \end{enumerate}
Any such configuration satisfies $\theta_1=\theta_2 = \arcsin\gamma$ and $\beta_1=\beta_2 = \pi-\arcsin\gamma$ and has an optimal objective value of $\gamma^2$.
\end{theorem}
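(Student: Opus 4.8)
The plan is to follow the template of the proof of \cref{theorem 1}: first show that the objective cannot exceed $\gamma^2$ on the restricted feasible set, then determine exactly when $\gamma^2$ is attained, and finally translate that characterization into the geometric conditions (i)--(ii).

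For the upper bound I would invoke \cref{lem:C1}, which gives $\theta_1+\theta_2+\beta_1+\beta_2 \le 2\pi$ for every $(\theta_1,\theta_2,\beta_1,\beta_2)\in\mathcal{F}$. Combined with the standing constraint $\beta_i \ge \pi-\arcsin\gamma$ this yields $\theta_1+\theta_2 \le 2\arcsin\gamma \le \pi$. Assuming $\theta_1\le\theta_2$ without loss of generality, this forces $\theta_1\le\arcsin\gamma\le\tfrac{\pi}{2}$ and $\theta_2\le\pi-\theta_1$, so by concavity of $\sin$ on $[0,\pi]$ we have $\sin\theta_2\ge\sin\theta_1$, hence $\min(\sin^2\theta_1,\sin^2\theta_2)=\sin^2\theta_1\le\gamma^2$. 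This plays the role of the ``objective $\le 1$'' step in \cref{theorem 1}. (Note also that $\beta_i\in[\pi-\arcsin\gamma,\pi]$ already implies $\sin^2\beta_i\le\gamma^2$, so \eqref{opt:gamma} is not the binding constraint here.)

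To pin down the equality case, attaining $\gamma^2$ requires $\sin^2\theta_1=\gamma^2$, whose only solution with $\theta_1\in[0,\arcsin\gamma]$ is $\theta_1=\arcsin\gamma$; then $\theta_1+\theta_2\le 2\arcsin\gamma$ forces $\theta_2\le\arcsin\gamma$, and since $\sin^2\theta_2\ge\gamma^2$ is also needed we get $\theta_2=\arcsin\gamma$. Now $\theta_1+\theta_2=2\arcsin\gamma$, so the two inequalities used above must be tight: $\beta_1+\beta_2=2\pi-2\arcsin\gamma$ together with $\beta_i\ge\pi-\arcsin\gamma$ gives $\beta_1=\beta_2=\pi-\arcsin\gamma$, and $\theta_1+\theta_2+\beta_1+\beta_2=2\pi$ forces the configuration into $\mathcal{C}_1$ by \cref{lem:C1} (non-degeneracy is in force), i.e.\ $s_1,t_1,s_2,t_2$ form, in that cyclic order, a convex quadrilateral. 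In this quadrilateral opposite angles are equal ($\theta_1=\theta_2$ at $t_1,t_2$ and $\beta_1=\beta_2$ at $s_1,s_2$); since the four angles sum to $2\pi$, each pair of consecutive angles is supplementary, so opposite sides are parallel and the quadrilateral is a parallelogram, which is (i). Applying the Law of Sines to $\triangle t_1t_2s_i$ as in the proof of \cref{theorem 1} gives a circumscribed circle through $t_1,t_2$ of diameter $d/\sin\beta_i=d/\gamma$, and convexity places $s_1$ and $s_2$ on opposite sides of the diagonal $t_1t_2$, hence on the two distinct such circles, which is (ii). For the converse, given a non-degenerate configuration satisfying (i)--(ii): (i) yields $\theta_1=\theta_2=:\theta$, $\beta_1=\beta_2=:\beta$ with $\theta+\beta=\pi$ and $(\theta,\beta)\in\mathcal{C}_1\subseteq\mathcal{F}$; (ii) and the Law of Sines give $\sin\beta=\gamma$, which with $\beta\in[\pi-\arcsin\gamma,\pi]$ forces $\beta=\pi-\arcsin\gamma$ and $\theta=\arcsin\gamma$; then the objective equals $\gamma^2$, \eqref{opt:gamma} holds with equality, so the configuration is feasible and meets the upper bound, hence optimal.

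I expect the fiddly parts to be, first, the equality bookkeeping---turning $\sin^2\theta_1=\gamma^2$ on a restricted interval into $\theta_1=\arcsin\gamma$ and squeezing $\beta_1,\beta_2$ to their common extreme value---and, second, the geometric translation, in particular the facts that the equality branch $\mathcal{C}_1$ of \cref{lem:C1} is exactly the convex-quadrilateral configuration and that a convex quadrilateral with both pairs of opposite angles equal is a parallelogram. The Law-of-Sines step and the feasibility check in the converse are routine and mirror \cref{theorem 1}.
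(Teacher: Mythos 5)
Your proof is correct, and its core is the same as the paper's: both rest on \cref{lem:C1} to get $\theta_1+\theta_2 \le 2\pi - \beta_1 - \beta_2 \le 2\arcsin\gamma$ and then use monotonicity of $\sin$ on $[0,\tfrac{\pi}{2}]$ to cap the objective at $\gamma^2$, with tightness forcing case $\mathcal{C}_1$ and $\theta_1=\theta_2=\arcsin\gamma$, $\beta_1=\beta_2=\pi-\arcsin\gamma$. The differences are in execution. For the upper bound you replace the paper's Jensen step ($\min \le$ average $\le \sin$ of the average) with a WLOG ordering $\theta_1\le\theta_2$ and the observation $\theta_2 \le \pi - \theta_1$, which is equally valid (the symmetry $\theta_1\leftrightarrow\theta_2$ is a genuine symmetry of $\mathcal{F}$, as swapping target labels permutes the cases in \cref{prop:cases}). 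More substantively, for necessity the paper simply appeals to concavity of the objective and linearity of $\mathcal{C}_1$ and does not spell out how conditions (i)--(ii) follow, whereas you do the equality bookkeeping explicitly ($\theta_1=\theta_2=\arcsin\gamma$, squeezing $\beta_1=\beta_2=\pi-\arcsin\gamma$, forcing $\mathcal{C}_1$ via the equality branch of \cref{lem:C1}) and then derive the geometry: equal opposite angles summing to $2\pi$ gives supplementary consecutive angles, hence a parallelogram, and the Law of Sines gives the circles of diameter $d/\gamma$, exactly as in the proof of \cref{theorem 1}. The only step worth tightening is ``opposite sides of the diagonal $t_1t_2$ implies different circles'': each of the two circles has arcs on both sides of that line, so you should add that $\beta_i = \pi-\arcsin\gamma \ge \tfrac{\pi}{2}$ places each sensor on the arc between the targets, which pins down which circle each sensor lies on; with that remark your argument is complete, and arguably more self-contained than the paper's.
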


\begin{proof}
    We prove sufficiency first.
    Incorporating the constraints on $\beta_1$ and $\beta_2$ and using the fact that $\sin(\cdot)$ is nonnegative on $[0,\pi]$, we can rewrite \eqref{eq: problem def for 2s2t} as
    \begin{align}\label{eq: problem def for 2s2t simplified}
            \underset{\theta_1, \theta_2, \beta_1, \beta_2}{\text{maximize}} \quad & \min (\sin\theta_1, \, \sin\theta_2)\\
            \textrm{subject to:} \quad &\pi-\arcsin\gamma \leq \beta_i \leq \pi & i&\in\{1,2\}\notag\\
            & 0 \leq \theta_k \leq \pi & k&\in\{1,2\}\notag\\
            & (\theta_1, \theta_2, \beta_1, \beta_2) \in \mathcal{F}.\notag
    \end{align}
    By concavity of $\sin(\cdot)$ on $[0,\pi]$ and Jensen's inequality,
    \begin{equation}\label{ineq1}
        \min(\sin\theta_1, \, \sin\theta_2)
        \leq \frac{\sin\theta_1 \!+\! \sin\theta_2}{2}
        \leq \sin\biggl( \frac{\theta_1+\theta_2}{2} \biggr).
    \end{equation}
    From \cref{lem:C1} we have $\theta_1 + \theta_2 + \beta_1 + \beta_2 \leq 2\pi$ with equality only in case $\mathcal{C}_1$. Therefore,
    \begin{equation}\label{ineq2}
    0 \leq \frac{\theta_1+\theta_2}{2} \leq \frac{2\pi-\beta_1-\beta_2}{2} \leq \arcsin\gamma \leq \frac{\pi}{2}.
    \end{equation}
    Since $\sin(\cdot)$ is monotonically increasing on $[0,\tfrac{\pi}{2}]$ we may combine \eqref{ineq1} and \eqref{ineq2} to upper-bound the objective of \eqref{eq: problem def for 2s2t simplified}
    \[
    \min(\sin\theta_1, \, \sin\theta_2)
    \leq \sin\biggl( \frac{\theta_1+\theta_2}{2} \biggr)
    \leq \gamma,
    \]
    with equality only possible for case $\mathcal{C}_1$. Indeed setting $\theta_1=\theta_2=\arcsin\gamma$ and $\beta_1=\beta_2=\pi-\arcsin\gamma$ renders the bound tight and satisfies the constraint for case $\mathcal{C}_1$ in \cref{prop:cases} and is therefore optimal. This solution is illustrated in \cref{fig: 2t2s Theorem2}.

    \begin{figure}[ht]
    \centering
    \includegraphics{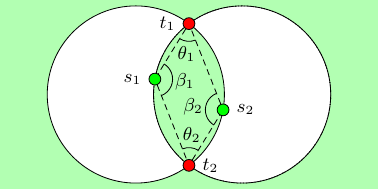}
    \caption{Example of an optimal sensor configuration when sensors are constrained to lie in the region between both targets (\cref{theorem 2}). Solid circles delineate the feasible set (see \cref{fig: level sets}). A configuration is optimal if  $s_1$ and $s_2$ lie on opposite arcs and $t_1$, $s_1$, $t_2$ and $s_2$ form a parallelogram.}
    \label{fig: 2t2s Theorem2}
\end{figure}

    Necessity follows from convexity of \eqref{eq: problem def for 2s2t simplified}. Indeed, from the derivation above, any optimal non-degenerate configuration must be of case $\mathcal{C}_1$ (see \cref{fig: seven cases}), so constraint $\mathcal{C}_1$ holds (all constraints are linear). Moreover, the objective of \eqref{eq: problem def for 2s2t simplified} is the minimum of two concave functions, so it is concave.
\end{proof}

\begin{remark}
    \hlcyan{Letting $s_1$ and $s_2$ approach $t_1$ and $t_2$, respectively, yields in the limit $\beta_1 = \beta_2 = \pi - \arcsin\gamma$ and $\theta_1 = \theta_2 = \arcsin\gamma$, which is a globally optimal (but degenerate) configuration of cases $\mathcal{C}_1$, $\mathcal{C}_6$, or $\mathcal{C}_7$.}
\end{remark}

%%%%%%%%%%%%%%%%%%%%%%%%%%%%%%%%%%%%%%%%%%%%%%%%%%%%%%%%%%%%%%%%%%%%%%%%%%%%%%%%%%%%%%
\section{ARBITRARILY MANY SENSORS}\label{sec:Extension}

In this section we extend our results to the case with $n=2$ targets and \hlorange{$m\geq 3$} sensors. When the sensors can be freely placed (as considered in \cref{theorem 1}) a globally optimal solution is to place the sensors infinitely far from the targets in a circular arrangement. Under these conditions the stealthiness constraint is automatically satisfied and the problem reduces to optimal sensor placement  with a single target \cite{moreno2013optimal, sadeghi2020optimal}. 

\begin{figure}[ht]
    \vspace{1mm}
    \centering
    \includegraphics{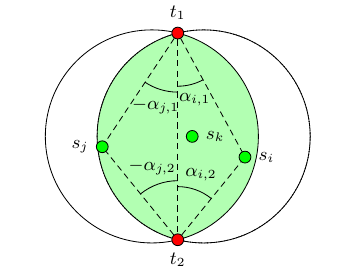}
    \caption{The problem with $n=2$ targets and $m \geq 3$ sensors, where sensors are constrained to be inside the green region ($\gamma = 0.95$ shown). Sensor $s_i$'s position is characterized by heading angles $\alpha_{i,1}$ and $\alpha_{i,2}$ from targets $t_1$ and $t_2$, respectively. Sensors have positive heading angles when right of the center-line and negative when left.}
    \vspace{-1mm}
    \label{fig: ms2t}
\end{figure}

The more practical, yet challenging, scenario is when the sensors are constrained to lie in the region between $t_1$ and $t_2$ (as considered in \cref{theorem 2}). It is convenient to re-parameterize this problem in terms of the \emph{heading angles} $\alpha_{i,k}$ of sensor $s_i$ viewed from target $t_k$ and measured relative to the center-line joining $t_1$ and $t_2$ (as illustrated in \cref{fig: ms2t}).
By convention $\alpha_{i,k}>0$ if $s_i$ is to the right of the center-line and $\alpha_{i,k}<0$ if $s_i$ is to the left. Thus we may set $\theta_{ij,k} = |\alpha_{i,k}-\alpha_{j,k}|$ and $\beta_{i} = \pi-|\alpha_{i,1}+\alpha_{i,2}|$ and using \eqref{eq:betarange}, write \eqref{eq: problem def general} in epigraph form as
\begin{align}\label{eq: problem def for ms2t}
            \underset{\alpha, \eta^2}{\text{maximize}} \quad & \eta^2\\
            \textrm{subject to:} \quad 
            & \mkern-18mu\sum_{1 \leq i < j \leq m}\mkern-18mu\sin^2(\alpha_{i,k}-\alpha_{j,k}) \geq \eta^2 & k&\in\{1,2\}\notag\\
            & \abs{\alpha_{i, 1} + \alpha_{i, 2}} \leq \arcsin \gamma & i&\in\{1,\dots,m\}\notag\\
            & \alpha_{i,1}\alpha_{i,2} \geq 0 & i&\in\{1,\dots,m\}.\notag
\end{align}
The final constraint ensures that $\alpha_{i,1}$ and $\alpha_{i,2}$ have the same sign, so that the $i\textsuperscript{th}$ rays emanating from $t_1$ and $t_2$ are guaranteed to intersect at $s_i$.

\cref{eq: problem def for ms2t} is non-convex due to the first and last constraints. A closed-form solution remains elusive, so in the sections that follow, we turn our attention to deriving analytic upper and lower bounds to the optimal $\eta^2$.

\subsection{Analytic lower bounds}\label{subsec: lower bound analytical}

For the maximization problem \eqref{eq: problem def for ms2t}, any feasible configuration of sensors yields a lower bound on the optimal objective value. Based on the results in \cref{sec: Problem formulation 2t2a}, we focus on two important configurations with associated lower bound
\begin{equation}\label{eq: lower bound symmetry}
    \eta^2
    \;
    = \sum_{1 \leq i < j \leq m}\mkern-18mu\sin^2(\alpha_{i,k}-\alpha_{j,k})\quad\text{for }k\in\{1,2\}.
\end{equation}

\paragraph{Degenerate configuration}\label{section:degenerate_configuration}
Refer to \cref{fig: dgnrt and unfrm} Left and consider the two arcs that form the boundaries of the green region. First, place two sensors on the left arc arbitrarily close to $t_1$ and $t_2$, respectively ($s_1$ and $s_2$). Then similarly place two sensors on the right arc ($s_3$ and $s_4$). Continue placing two sensors at a time on alternating arcs until all sensors are placed. If there is an odd number of sensors then place the last sensor in the center of the arc ($s_5$).

Since the sensors are either arbitrarily close to the targets or in the middle of one of the arcs, the angles $\alpha_{i, k}$ are equal to $\pm\psi$ or $\pm\phi$ (see \cref{fig: dgnrt and unfrm}, Left). By straightforward calculations
we obtain $\psi = 2\phi = \arcsin\gamma$. As a result, using \eqref{eq: lower bound symmetry}, we can compute the lower bound analytically
\begin{equation}\label{eq: analytical lb degenerate}
    \eta^2_\textup{opt} \geq
    \left\{
    \begin{aligned}
    &\tfrac{1}{4} m^2 \gamma^2 (2-\gamma^2) \\
    &\tfrac{1}{4} (m\!-\!1) \bigl(2-2 (1\!-\!\gamma ^2)^{3/2}+(m\!-\!1)\gamma ^2 (2\!-\!\gamma ^2)\bigr) \\
    &\tfrac{1}{4} m^2 \gamma ^2 (2-\gamma ^2) -  \gamma ^2 + \gamma^4 \\
    &\tfrac{1}{4} (m\!-\!1) \bigl(2-2 (1\!-\!\gamma ^2)^{3/2}+(m\!-\!1)\gamma ^2 (2\!-\!\gamma ^2)\bigr) \\
    &\hspace{3.2cm} -\gamma^2+\gamma^4+\gamma ^2\sqrt{1-\gamma ^2}\hspace{-0.3cm}
    \end{aligned}
    \right.
\end{equation}

\noindent where the four lower bounds in \eqref{eq: analytical lb degenerate} correspond to the cases $m\equiv 0,1,2,3 \pmod{4}$, respectively.

\begin{figure}[t]
\centering
\includegraphics[page=1]{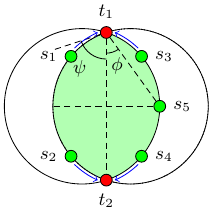}
\includegraphics[page=2]{fig_dgnrt_unfrm}
\caption{Two configurations used to lower-bound \eqref{eq: problem def for ms2t}. \textbf{Left:} \emph{Degenerate} configuration. Sensors are placed on the boundary of the green region (arbitrarily close to the targets) forming an angle $\psi$ with the center line. When $m$ is odd the last sensor is placed in the middle of the arc with fewest sensors on it; forming an angle $\phi = \frac{1}{2}\psi$ with the center line. \textbf{Right:} \emph{Uniform} configuration. Sensors are placed such that the lengths of the arcs between neighboring sensors (shown in brown) are equal.}
\label{fig: dgnrt and unfrm}
\end{figure}

\paragraph{Uniform configuration} 
In this configuration, the sensors are placed on the boundaries of the green region such that the arc length between neighboring sensors is equal to $2\delta$ (see \cref{fig: dgnrt and unfrm} Right). Straightforward computation yields
$\delta = \frac{2}{m}\arcsin\gamma$.
It can be seen from \cref{fig: dgnrt and unfrm} (Right) that the $(\alpha_{i,k}-\alpha_{j,k})$ are multiples of $\delta$. Direct calculation gives
\begin{multline}\label{eq: analytical lb uniform}
    \eta_\textup{opt}^2 \geq 
    \sum_{1 \leq i < j \leq m}\mkern-18mu\sin^2(\alpha_{i,1}-\alpha_{j,1})
    = \mkern-18mu\sum_{1 \leq k \leq m-1}\mkern-18mu(m-k)\sin^2(k\delta)\\
    = \frac{1}{8\sin^2\delta}\left( m^2 - 1 + \cos(2m\delta) - m^2\cos(2\delta) \right),
\end{multline}
\cref{eq: analytical lb uniform} together with $\delta = \frac{2}{m}\arcsin\gamma$ gives the second analytical lower bound to \eqref{eq: problem def for ms2t}.

\subsection{Analytic upper bound}\label{subsec: upper bound analytical}
We provide two upper-bounds for \eqref{eq: problem def for ms2t}.

\paragraph{Constraint approach}
\hlorange{Let $\mathcal{S}$ and $\mathcal{D}$ denote the pairs of indices corresponding to agents on the \emph{same} side and \emph{different} side of the centerline between both targets.}
\hlorange{Since $|\alpha_{i,k}|\leq\arcsin\gamma$ we can upper-bound $|\alpha_{i,k}-\alpha_{j,k}|$ based on whether $(i,j)$ are on the same side (in $\mathcal{S}$) or on different sides (in $\mathcal{D}$).}
As a result, we have for $k\in\{1,2\}$
\begin{equation*}
    \sin^2(\alpha_{i,k}-\alpha_{j,k}) \leq
    \begin{cases}
        \gamma^2 & (i,j) \in \mathcal{S}\\
        4\gamma^2(1-\gamma^2) & (i,j) \in \mathcal{D},\;\gamma \leq \frac{1}{\sqrt{2}} \\
        1 & (i,j) \in \mathcal{D},\;\gamma > \frac{1}{\sqrt{2}}.
    \end{cases}
\end{equation*}
Assuming we have $p$ agents on the right side and $m-p$ on the left, we may apply the bound above and obtain:
\begin{align}\label{eq: upper bound p dependent}
    \eta^2 &= \sum_{1\leq i < j \leq m} \mkern-18mu\sin^2(\alpha_{i,k}-\alpha_{j,k})
    \leq
        \biggl(\binom{p}{2} + \binom{m-p}{2}\biggr)\gamma^2 \notag\\
        &\hspace{1cm} + m(m-p) \begin{cases}
            4\gamma^2(1-\gamma^2) & \gamma \leq \frac{1}{\sqrt{2}} \\
            1 & \gamma > \frac{1}{\sqrt{2}}.
        \end{cases}
\end{align}
Maximizing the right-hand side of \eqref{eq: upper bound p dependent} over $p$ gives $p^\star = \frac{m}{2}$ for even and $p^\star = \frac{m-1}{2}$ for odd values of $m$ and
\begin{equation}\label{eq: upper bound first method}
    \eta^2_{\text{opt}} \leq \begin{cases}
        \frac{1}{4}m\gamma^2\bigl(m(5 -4\gamma^2) - 2\bigr) & \gamma \leq \frac{1}{\sqrt{2}}\\
        \frac{1}{4}m\bigl(m - 2\gamma^2 + m \gamma^2 \bigr) & \gamma > \frac{1}{\sqrt{2}}.
    \end{cases}
\end{equation}
\cref{eq: upper bound first method} is an analytical upper bound to \eqref{eq: problem def for ms2t}.

\paragraph{Jensen approach}
Let $g(\theta)$ be a \hlorange{\emph{concave non-decreasing}} upper bound $\sin^2\theta \leq g(\theta)$. Applying Jensen's inequality to \eqref{eq: lower bound symmetry}, we obtain for $k\in\{1,2\}$
\begin{align}\label{eta2 upper bound jensen}
    \eta^2 &= \sum_{1 \leq i < j \leq m}\mkern-10mu\sin^2(\alpha_{i,k}-\alpha_{j,k}) \notag \\
    &\leq \binom{m}{2}g\Biggl(\binom{m}{2}^{-1}\mkern-18mu\sum_{1 \leq i < j \leq m}\mkern-10mu (\alpha_{i,k}-\alpha_{j,k})\Biggr).
\end{align}
The upper bound is determined by choosing $\alpha_{i,k}$ in order to maximize the minimum of the right-hand side of \eqref{eta2 upper bound jensen} for $k\in\{1,2\}$. We next sketch the derivation. Start by assuming without loss of generality that $\alpha_{1, 1} \geq \dots \geq \alpha_{m, 1}$, then
\begin{equation}\label{eq: sum theta 1}
    \sum_{1 \leq i < j \leq m} \mkern-18mu (\alpha_{i,1}-\alpha_{j,1}) = \sum_{i=1}^m (m - 2i + 1)\alpha_{i, 1}.
\end{equation}
The steps involve showing that:
(1) half of the $\alpha_{i,1}$ must be nonegative and the other half nonpositive,
(2) The second constraint of \eqref{eq: problem def for ms2t} is tight leading to $\alpha_{i,2}+\alpha_{i,1} = \pm\arcsin\gamma$ and $\alpha_{1,2}\leq \cdots \leq \alpha_{m,2}$, and
(3) The positive-coefficient and negative-coefficient terms of \eqref{eq: sum theta 1} can be separately upper-bounded using $\min(a,b) \leq \tfrac{1}{2}(a+b)$, which has a trivial solution of letting half the $\alpha_{i,k}$ equal to 0 and the other half equal to $\pm \arcsin\gamma$.
This solution leads to
\begin{equation}\label{eq: upper bound second method}
    \eta^2_{\text{opt}} \leq \begin{cases}
        \binom{m}{2}g\Bigl( \frac{3m}{4(m-1)}\arcsin\gamma \Bigr) & m\text{ even}\\
        \binom{m}{2}g\Bigl( \frac{3(m+1)}{4m}\arcsin\gamma \Bigr) & m\text{ odd}
    \end{cases}
\end{equation}
\cref{eq: upper bound second method} is the second analytical upper bound to \eqref{eq: problem def for ms2t}. For our simulations, we used a tight concave upper-bound:
\[
g(\theta) = \begin{cases}
0.724611\cdot  \theta & 0\leq \theta\leq 1.16556 \\
\sin^2(\theta) & 1.16556 < \theta < \frac{\pi}{2} \\
    1 & \tfrac{\pi}{2} \leq \theta 
    \leq \pi.
\end{cases}
\]

\subsection{Combining the bounds}\label{sec: results}
In \cref{fig: sim results} we plot the lower bounds of \eqref{eq: analytical lb degenerate} and \eqref{eq: analytical lb uniform} and upper bounds of \eqref{eq: upper bound first method} and \eqref{eq: upper bound second method} normalized by $m^2$ for $m=3,6,9$ and $m\to\infty$. Different bounds are tighter for different values of $\gamma$ and $m$. The gap between our best upper and lower bounds (shaded region) is quite small for all $m$.

We also solved \eqref{eq: problem def for ms2t} directly with a local optimizer%
\footnote{\hlyellow{We used MATLAB's {\footnotesize\tt fmincon} local optimizer with the default {\footnotesize\tt interior-point} algorithm and initialized the decision variables with $\alpha_{i,k} \sim \mathrm{Uniform}[-\arcsin\gamma,\arcsin\gamma]$ and $\eta=0$ for each $\gamma,m,i,k$. Using the algorithm {\footnotesize\tt trust-region-reflective} yielded similar results, but {\footnotesize\tt sqp} did not work at all.}}. As seen in \cref{fig: sim results}, the optimizer is often trapped in local minima, leading to solutions that are worse than our lower bounds.
Moreover, the optimizer never beats our lower bound, suggesting that our lower bound may be globally optimal.
\begin{figure}[!ht]
    \centering
    \includegraphics{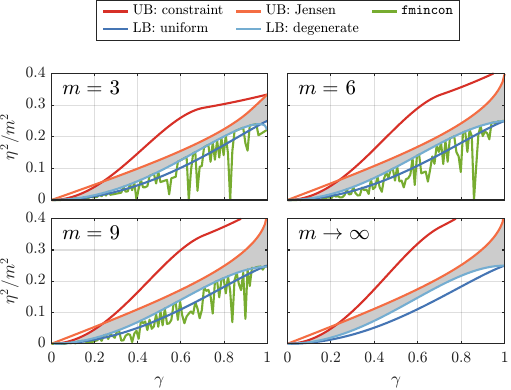}
    \caption{Normalized upper and lower bounds provided in \cref{subsec: lower bound analytical} and \cref{subsec: upper bound analytical} for $m=3,6,9$ and $m\to\infty$. In the first three simulations, the exact problem in \eqref{eq: problem def for ms2t} has been solved using {\footnotesize\tt fmincon} in MATLAB. The shaded region indicates the gap between best upper and lower bounds.}
    \label{fig: sim results}
\end{figure}

%%%%%%%%%%%%%%%%%%%%%%%%%%%%%%%%%%%%%%%%%%%%%%%%%%%%%%%%%%%%%%%%%%%%%%%%%%%%%%%%%%%%%%
\section{CONCLUSION AND FUTURE RESEARCH}\label{sec: Conclusion and Future Work}
We considered the problem of stealthy optimal sensor placement, where a network of range sensors attempts to maximize the localization information obtained about a set of targets, while limiting the information revealed to the targets (which also have range sensors). We provided analytical solutions for $m=2$ sensors and $n=2$ targets. We also found upper and lower bounds for $m\geq 3$ and $n=2$.
One direction for future research is to explore different sensor models such as bearing-only sensors or sensors with range-dependent noise.
\hlcyan{Another area of exploration is the case of $n\geq 3$ targets.} In 
\cref{fig: feasible 4 targets}, we show the region satisfying a stealthiness constraint for a particular arrangement of $n=4$ targets. This is far more complex than the $n=2$ case (see \cref{fig: level sets}). Small changes in $\gamma$ or the target locations can cause dramatic changes to the topology and connectedness of the feasible set.\looseness=-1

\begin{figure}[!ht]
    \centering
\includegraphics{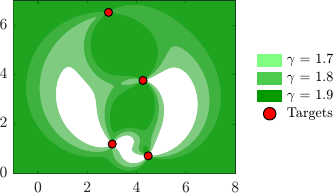}
    \caption{Feasible set corresponding to $\sum_{1\leq k < \ell \leq 4}\sin^2{\beta_{k\ell}} \leq \gamma^2$ for different $\gamma$ and $n=4$ targets. Compared to the $n=2$ case (see \cref{fig: level sets}), the feasible set is complicated; small changes in $\gamma$ cause dramatic changes in shape and connectedness.}
    \label{fig: feasible 4 targets}
\end{figure}

%%%%%%%%%%%%%%%%%%%%%%%%%%%%%%%%%%%%%%%%%%%%%%%%%%%%%%%%%%%%%%%%%%%%%%%%%%%%%%%%%%%%%%
\section{ACKNOWLEDGMENTS} The authors wish to thank Dr. Christopher M. Kroninger (DEVCOM ARL) for support and insights on this topic.

\bibliographystyle{IEEEtran}
\bibliography{references}
\end{document}